\newtheorem{theorem}{Theorem}[section]
\newtheorem{proposition}[theorem]{Proposition}
\title[Draft of Multi-objective Non-cooperative Game Model for Cost-based Task Scheduling in Computational Grid]
      {Multi-objective Non-cooperative Game Model for Cost-based Task Scheduling in Computational Grid}
\author[Ziyan Gao, Yong Wang, Yifan Gao, Xingtian Ren]
    {Ziyan Gao, Yong Wang, Yifan Gao, Xingtian Ren\\
     College of Computer Science and Technology,\\
     Faculty of Information Technology,\\
     Beijing University of Technology, Beijing, China\\
     }
\begin{document}
\label{firstpage}

\makecorrespond

\maketitle

\begin{abstract}
Task scheduling is an important and complex problem in computational grid. A computational grid often covers a range of different kinds of nodes, which offers a complex environment. There is a need to develop algorithms that can capture this complexity as while as can be easily implemented and used to solve a wide range of load-balancing scenarios. In this paper, we propose a game theory based algorithm to the grid load balancing problem on the principle of minimizing the cost of the grid. The grid load-balancing problem is treated as a noncooperative game. The experiment results demonstrate that the game based algorithm has a desirable capability to reduce the cost of the grid.
\end{abstract}

\begin{keywords}
Cost-based Task Scheduling, Computational Grid, Game Model, Multi-objective Optimization

\end{keywords}

\section{Introduction}
Task scheduling is the key problem in computational grid research, which commonly studies on task allocation. Task allocation aims at getting a fairness of load between each computational node, while minimizing the average cost of task run, or minimizing the average task execution time, which generally considered as to reach a best performance. Different with traditional distributed system, computational grid has its own characteristics, which cost or execution time of tasks running on computational nodes is only one side in the whole system. In fact, computational grid often coves a wide scope, cost on communication is a considerable factor in the system cost.

In this paper, we propose a game theoretic based solution to the grid load balancing problem. Different with other research, our research focuses on minimizing the cost of system when executing tasks, but not minimizing the execution time of tasks. In this game theory based solution to the grid load balancing problem, we make the cost as the object of game, and the slice strategy on each scheduler as the game strategy.

In general, job allocation algorithms in distributed systems can be classified as static or dynamic \cite{2}. In static algorithms, job allocation decisions are made at compile time and remain constant during runtime. For example, in \cite{3}, Kim and Kameda proposed a simplified load balancing algorithm, which targets at the minimizing the overall mean job response time via adjusting the each node's load in a distributed computer system that consists of heterogeneous hosts, based on the single-point algorithm originally presented by Tantawi and Towsley. Grosu and Leung\cite{4} formulated a static load balancing problem in single class job distributed systems from the aspect of cooperative game among computers. Also, there exists several studies on static load balancing in multi-class job systems \cite{5,6}. In contrast, dynamic job allocation algorithms attempt to use the runtime state information to make more informative job allocation decisions. In \cite{7}, Delavar introduced a new scheduling algorithm for optimal scheduling of heterogeneous tasks on heterogeneous sources, according to Genetic Algorithm which can reach to better makespan and more efficiency. In \cite{8}, Fujimoto proposed a new algorithm RR that uses the criterion called total processor cycle consummation, which is the total number of instructions the grid could compute until the completion time of the schedule, regardless how the speed of each processor varies over time, the consumed computing power can be limited within $(1+m(\ln(m-1)+1)/n)$( $m$ represents the number of the processor, n represents the number of independent coarse-grained tasks with the same length)times the optimal one.

For balanced task scheduling, \cite{9,10,11} proposed some models and task scheduling algorithms in distributed system with the market model and game theory. \cite{12,13} introduced a balanced grid task scheduling model based on non-cooperative game. QoS-based grid job allocation problem is modeled as a cooperative game and the structure of the Nash bargaining solution is given in \cite{2}. In \cite{14}, Wei and Vasilakos presented a game theoretic method to schedule dependent computational cloud computing services with time and cost constrained, in which the tasks are divided into subtasks. The above works generally take the scheduler or job manager as the participant of the game, take the total execution time of tasks as the game optimization goals and give the proof of the existence of the Nash equilibrium solution and the solving Nash equilibrium solution algorithm, or model the task scheduling problem as a cooperative game and give the structure of the cooperative game solution.

\section{Multi-objective Non-cooperative Game Model}

\subsection{System Model}

A relatively complex computational grid is as Fig. \ref{SM} illustrates. There are $l$ users, $n$ schedulers and $m$ computational nodes:

\begin{itemize}
  \item \textbf{Users}. Users generate tasks to schedulers. Each user generates tasks independently and obeys Poisson distribution.
  \item \textbf{Schedulers}. Schedulers accept tasks from the users, according to the numbers of computational nodes, make a task into slices $a_{ij}(1\geq i\leq n, 1\geq j\leq m)$ ($j$-th slice to computational node $j$ from scheduler $i$). The slice $a_{ij}$ satisfied the constrain of Equation (\ref{ST1}).

      \begin{equation}\label{ST1}
        \sum_{j=1}^m a_{ij}=1(a_{ij}\geq 0)
      \end{equation}
  \item \textbf{Computational Nodes}. Computational nodes really execute task slices. The average processing rate of node $j$ is $u_j$, and the processing time can obey any distribution, so, every computational node can be deemed as an M/G/1 queuing system \cite{24}. Assume that $\lambda_i$ is the average task arriving rate at the scheduler $i$. Two constraints should be satisfied as Equations (\ref{ST2}) and (\ref{ST3}) define.

      \begin{equation}\label{ST2}
        \sum_{i=1}^n\lambda_i < \sum_{j=1}^m u_j
      \end{equation}

      \begin{equation}\label{ST3}
        \sum_{i=1}^n (\lambda_i\cdot a_{ij}) < u_j
      \end{equation}

      The Equation (\ref{ST2}) must be satisfied to prevent tasks to be added infinitely. And the Equation (\ref{ST3}) prevents the task slices allocated to node $j$ to exceed its processing ability.
\end{itemize}

\begin{figure}
    \centering
    \includegraphics{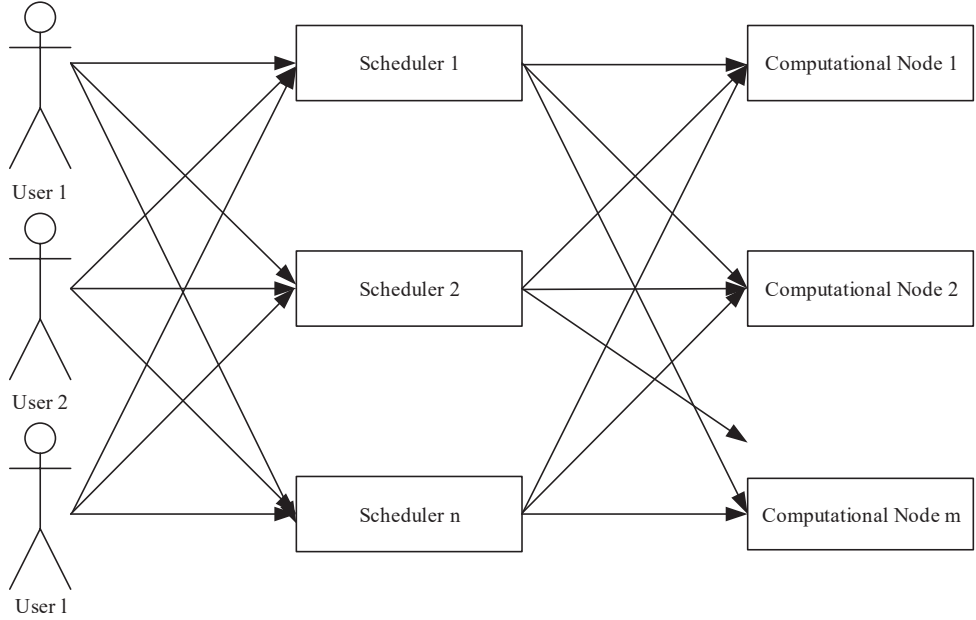}
    \caption{System Model of a Computational Grid}
    \label{SM}
\end{figure}

\subsection{Costs for Tasks in Computational Grid}

Chandy \cite{26} proposed a cost model of tasks in computational grid. It includes four kinds of costs: power cost, network cost, loss cost and utilization cost. Their computational methods is listed in Table \ref{CTCG}. The total costs are the sum of the four kinds of costs, that is, $c_u=u_p+u_n+u_w+u_r$.

\begin{table*}
\centering
\caption{Costs for Tasks in Computational Grid}
\begin{tabular}{|c|c|p{4.5cm}|}
    \hline
    Cost Types  & Computational Methods & Parameter Meanings \\
    \hline
    Power Cost & $u_p=c_p\cdot p\cdot T$ & $c_p$ is cost/Joule, $p$ is the capacity, $T$ is the utilization time of resources.\\
    Network Cost & $u_n=(c_{bw}\cdot b+c_n)\cdot T$ & $c_{bw}$ is the cost of bandwidth, $b$ is the average bandwidth utilized, $c_n$ is the utilization cost within a unit time, $T$ is the utilization time of networks.\\
    Loss Cost & $u_w=\frac{c_r}{MTTF}\cdot T$ & $c_r$ is the cost of the resources, $MTTF$ is the failure time, $T$ is the utilization time of the related resources. \\
    Utilization Cost & $u_r=c_f\cdot\rho\cdot f_r\cdot T$ & $c_f$ is the fixed cost of amortization, $f_r$ is related parts of resources, $\rho$ is the utilization factor of resources, $T$ is the utilization time of resources. \\
    \hline
\end{tabular}

\label{CTCG}
\end{table*}

\subsection{Game Model}

Assume that $B_i$ is the average data length of a task, the transmission time of task slice $a_{ij}$ is as Equation (\ref{TT}) defines, where $e_{ij}$ is the transmission delay between the scheduler $i$ and the computational node $j$, $c_{ij}$ is the bandwidth between the scheduler $i$ and the computational node $j$.

\begin{equation}\label{TT}
L_{ij}=(e_{ij}+\frac{B_i}{c_{ij}})\cdot a_{ij}
\end{equation}

Replacing $T$ by Equation (\ref{TT}) in computational method of network cost $u_n=(c_{bw}b+c_n)T$ in Table \ref{CTCG}, we get Equation (\ref{NC}).

\begin{equation}\label{NC}
u_{nij}=(c_{bw}\cdot c_{ij}+c_n)(e_{ij}+\frac{B_i}{c_{ij}})\cdot a_{ij}
\end{equation}

A computational node in computational grids can be deemed as an M/G/1 queuing system \cite{24}. Task processing time on a computational node includes servicing time and waiting time, their computational methods as Equation (\ref{ST}) and (\ref{WT}) define.

\begin{equation}\label{ST}
F_{ij1}=\overline{h_j}\cdot a_{ij} \cdot \lambda_i
\end{equation}

\begin{equation}\label{WT}
F_{ij2}=\frac{a_{ij}\cdot \lambda_i \cdot\overline{h_j^2}\cdot\sum_{k=1}^n(a_{kj}\cdot\lambda_k)}{2(1-\overline{h_j}\cdot\sum_{k=1}^n(a_{kj}\cdot\lambda_k))}
\end{equation}

Where $\overline{h_j}=\frac{1}{u_j}$ is the average of task servicing time on computational node $j$, $\overline{h_j^2}$ is the variance of task servicing time, and $\lambda_k$ is the average arriving rate of the scheduler $k$.

The power cost of task slice $a_{ij}$ processing on computational node $j$ is composed of the power cost during waiting time and serving time. Replacing $T$ in the power cost computation method $u_p=c_p\cdot p\cdot T$ in Table \ref{CTCG} by Equation (\ref{ST}) and Equation (\ref{WT}), we get the power cost of task slice $a_{ij}$ on node $j$ as Equation (\ref{PC}) defines.

\begin{equation}\label{PC}
u_{pij}=c_p\cdot p_{j1}\cdot \overline{h_j}\cdot a_{ij} \cdot \lambda_i +c_p\cdot p_{j2}\cdot \frac{a_{ij}\cdot \lambda_i \cdot \overline{h_j^2}\cdot \sum_{k=1}^n(a_{kj}\cdot \lambda_k)}{2(1-\overline{h_j}\cdot \sum_{k=1}^n(a_{kj}\cdot \lambda_k))}
\end{equation}

Where $p_{j1}$ is the capacity of node $j$ during servicing time, while $p_{j2}$ is the capacity of node $j$ during waiting time.

The loss cost of task slice $a_{ij}$ processing on node $j$ is determined by the processing time on node $j$. Replacing $T$ by Equation (\ref{ST}) and Equation (\ref{WT}) in the loss cost computation method $u_w=\frac{c_r}{MTTF}\cdot T$ in Table \ref{CTCG}, we get loss cost of task slice $a_{ij}$ on node $j$ as Equation (\ref{LC}) defines.

\begin{equation}\label{LC}
u_{wij}=\frac{c_{rj}}{MTTF_j}\cdot(\overline{h_j}\cdot a_{ij}\cdot \lambda_i+\frac{a_{ij}\cdot\lambda_i\cdot \overline{h_j^2}\cdot\sum_{k=1}^n(a_{kj}\cdot \lambda_k)}{2(1-\overline{h_j}\cdot \sum_{k=1}^n(a_{kj}\cdot\lambda_k))})
\end{equation}

The resource utilization cost of task slice $a_{ij}$ on node $j$ is composed by two parts: one is the CPU utilization cost, and the other is the hard disk utilization cost. The computing percentage of CPU is $f_r=\frac{C_i}{C_j}\cdot a_{ij}$, where $C_i$ is average computation of tasks and $C_j$ is the computation provided by node $j$. The CPU utilization cost occurs during servicing time. The disk utilizing percentage is $f_r=\frac{B_i}{D_j}\cdot a_{ij}$, where $B_i$ is the average bits of tasks and $D_j$ is the disk space of node $j$. The hard disk utilization cost occurs during servicing time and waiting time. So, we get the formula of utilization cost as Equation (\ref{UC}) defines.

\begin{equation}\label{UC}
u_{rij}=c_{fj}\cdot \rho_j\cdot \frac{C_i}{C_j}\cdot a_{ij}\cdot \overline{h_j}\cdot a_{ij}\cdot\lambda_i+c_{fj}\cdot \rho_j\cdot \frac{B_i}{D_j}\cdot a_{ij}\cdot (\overline{h_j}\cdot a_{ij}\cdot\lambda_i+\frac{a_{ij}\cdot\lambda_i\cdot \overline{h_j^2}\sum_{k=1}^n(a_{kj}\cdot \lambda_k)}{2(1-\overline{h_j}\cdot \sum_{k=1}^n(a_{kj}\cdot \lambda_k))})
\end{equation}

So, the total costs of task slice $a_{ij}$ is $c_{uij}=u_{pij}+u_{nij}+u_{wij}+u_{rij}$.

Each scheduler shares the set of computational nodes in a computational grid with each other, it is independent and competes with each other to make its task processing costs minimal. Thus, a non-cooperative game exists among schedulers from $1$ to $n$, and every scheduler acts as a player of the game. The task slicing strategy of scheduler $i$ is $a_i=\{a_{i1},\cdots,a_{ij},\cdots,a_{im}\}$. In the game, every scheduler expects that its task processing cost is minimal. That is, the objective function is as Equation (\ref{OF1}) defines.

\begin{equation}\label{OF1}
MIN(C_{ui})=\left\{
\begin{aligned}
MIN(\sum_{j=1}^m c_{uij}) \\
\textrm{s.t.}\\
\textrm{Equation (\ref{ST1})}\\
\textrm{Equation (\ref{ST2})}\\
\textrm{Equation (\ref{ST3})}
\end{aligned}
\right.
\end{equation}

Where $c_{uij}=u_{pij}+u_{nij}+u_{wij}+u_{rij}$.

\begin{proposition}
The game using Equation (\ref{OF1}) as objective function has a unique Nash equilibrium.
\end{proposition}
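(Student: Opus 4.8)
The plan is to read the program (\ref{OF1}) as a concave (convex-cost) $n$-player game on a coupled convex strategy set and then to apply the classical existence/uniqueness results of Rosen for such games; the structural parts are routine and the real work is a convexity and monotonicity analysis of the cost functionals. First I would fix the strategy space: scheduler $i$ chooses $a_i=(a_{i1},\dots,a_{im})$ in the simplex defined by (\ref{ST1}), while the joint action $a=(a_1,\dots,a_n)$ is further constrained by the linear inequalities (\ref{ST3}); write $\mathcal A$ for the resulting coupled feasible set and $\sigma_j:=\sum_{k=1}^n\lambda_k a_{kj}$ for the aggregate load on node $j$. Then $\mathcal A$ is convex and nonempty: the uniform split $a_{ij}=u_j/\sum_k u_k$ satisfies (\ref{ST1}) and, by (\ref{ST2}), gives $\sigma_j=u_j(\sum_i\lambda_i)/(\sum_k u_k)<u_j$, so (\ref{ST3}) holds strictly. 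The open inequality in (\ref{ST3}) causes no trouble: the waiting term (\ref{WT}), and hence $C_{ui}$, diverges as $\overline{h_j}\,\sigma_j\to 1$, so one may restrict to a compact convex subset of $\mathcal A$ bounded away from the boundary $\{\sigma_j=u_j\}$ without discarding any equilibrium or best response.

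Next -- and this is the crux -- I would show that each $C_{ui}(\cdot,a_{-i})=\sum_{j=1}^m c_{uij}$ is continuous on $\mathcal A$ and convex in $a_i$. Fixing $a_{-i}$ and setting $t:=\lambda_i a_{ij}$, $s:=\sum_{k\ne i}\lambda_k a_{kj}\ge 0$, $h:=\overline{h_j}=1/u_j$, one has $\sigma_j=t+s$ with $h(t+s)<1$ by (\ref{ST3}). The network cost (\ref{NC}) is affine in $a_{ij}$; the power and loss costs (\ref{PC}), (\ref{LC}) and the first utilization term in (\ref{UC}) are, up to nonnegative constants and affine pieces, built from $a_{ij}^2$ (convex) and from $g(t):=t(t+s)/(1-h(t+s))$, for which a direct computation gives
\[
g''(t)=\frac{2\,(1-hs)}{\bigl(1-h(t+s)\bigr)^{3}}\ \ge\ 0,
\]
because $hs\le h(t+s)<1$ by (\ref{ST3}); the remaining cross term $a_{ij}\cdot F_{ij2}$ in (\ref{UC}), proportional to $t^{2}(t+s)/(1-h(t+s))$, likewise has nonnegative second derivative on $\{t\ge 0,\ h(t+s)<1\}$, again using (\ref{ST3}) to keep the denominator positive. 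Since $C_{ui}$ is separable across $j$, convexity of each $c_{uij}$ in its single variable $a_{ij}$ yields joint convexity of $C_{ui}$ in $a_i$, in fact strict convexity wherever the $a_{ij}^2$ utilization terms are present.

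Existence of a Nash equilibrium then follows from the Debreu--Glicksberg--Fan/Rosen theorem for games with nonempty compact convex strategy sets and payoffs (here $-C_{ui}$) continuous and concave in each player's own action; equivalently, the best-response map is nonempty, convex-valued and upper hemicontinuous, so Kakutani's fixed-point theorem applies. For uniqueness I would invoke Rosen's diagonal strict convexity criterion: it suffices to exhibit weights $r_i>0$ for which the weighted pseudo-gradient $(r_i\nabla_{a_i}C_{ui}(a))_{i=1}^n$ is strictly monotone on $\mathcal A$. Because players interact only through the aggregates $\sigma_j$, the Jacobian $G(a)$ of the pseudo-gradient has a tightly constrained off-diagonal structure (one coupling block per node), so positive definiteness of $\tfrac12\bigl(G(a)+G(a)^{\top}\bigr)$ with $r_i\equiv 1$ reduces, node by node, to the one-dimensional strict convexity already obtained above; alternatively, one may try to exploit a (weighted) potential structure suggested by the form $(\lambda_i a_{ij})\,W_j(\sigma_j)$ of the congestion-dependent costs, with $W_j$ increasing and convex, and locate the equilibrium as the unique minimizer of a strictly convex potential on the convex set $\mathcal A$. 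The hard part will be precisely this analytic core -- pushing the second-derivative (convexity) and global positive-definiteness (monotonicity) computations through for the waiting-cost and, above all, the cubic-over-linear utilization-cost terms, where one has to use (\ref{ST3}) carefully to keep the denominators $1-\overline{h_j}\sigma_j$ bounded away from zero.
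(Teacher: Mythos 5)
Your convexity analysis is correct and goes well beyond the paper's own argument: the paper's entire proof is the bare assertion that $C_{ui}$ is continuous, convex and increasing in $a_{ij}$ (it displays the sign conditions on $\partial C_{ui}/\partial a_{ij}$ and $\partial^2 C_{ui}/\partial a_{ij}^2$ without computing anything), whereas you actually verify convexity term by term --- the identity $g''(t)=2(1-hs)/\bigl(1-h(t+s)\bigr)^3$ is right, the cubic-over-linear utilization cross term is indeed convex for $t\ge 0$, the network cost is affine --- and you treat the strict inequality (\ref{ST3}) correctly by compactifying away from the blow-up boundary. Up to and including existence (Kakutani/Rosen on the truncated feasible set), your argument is sound and essentially a rigorous version of what the paper waves at.

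The genuine gap is in uniqueness, and it sits exactly where you yourself flag ``the hard part.'' Strict convexity of each $C_{ui}$ in the player's own block $a_i$ yields single-valued best responses, but it does not yield uniqueness of the Nash equilibrium; that is precisely what the proposition claims and precisely what neither you nor the paper proves. Your assertion that positive definiteness of the symmetrized pseudo-gradient Jacobian with $r_i\equiv 1$ ``reduces, node by node, to the one-dimensional strict convexity already obtained'' is not justified: the off-diagonal blocks $\partial^2 c_{uij}/\partial a_{ij}\,\partial a_{kj}$ couple players through $\sigma_j$ asymmetrically (via the player-specific factors $\lambda_i$, $B_i$, $C_i$ and the $t^2\sigma_j/(1-h\sigma_j)$ utilization term), so own-variable convexity does not make the symmetrized matrix positive definite, and Rosen's diagonal strict concavity condition would have to be verified by an explicit computation you have not carried out. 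The potential-game alternative also does not go through as stated, because the costs are not of the form $\lambda_i a_{ij}\,W_j(\sigma_j)$ --- the $a_{ij}^2$ terms, the $a_{ij}F_{ij2}$ cross term and the heterogeneous per-player coefficients destroy the exact-potential structure --- and the known parallel-link uniqueness theorems for routing games are proved for cost families that do not cover these M/G/1-based expressions. So, as written, your proposal establishes existence but only conjectures uniqueness; to be fair, the paper's one-line proof does no more, but if the goal is an actual proof of the proposition, the monotonicity/diagonal-strict-concavity computation (or a bespoke uniqueness argument in the spirit of the parallel-link routing literature) still has to be supplied.
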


\begin{proof}
Just because Equation (\ref{OF1}) is continuous, convex, and increasing, that is,

$$\frac{\partial C_{ui}}{\partial a_{ij}}\geq 0$$

$$\frac{\partial^2 C_{ui}}{\partial a_{ij}^2}\geq 0$$
\end{proof}

But, Equation (\ref{OF1}) is very complex and hard to be solved. We make it to be a multi-objective game as Equation (\ref{OF2}) defines.

\begin{equation}\label{OF2}
MIN(C_{ui})=\left\{
\begin{aligned}
MIN(\sum_{j=1}^m u_{pij}) \\
MIN(\sum_{j=1}^m u_{nij}) \\
MIN(\sum_{j=1}^m u_{wij}) \\
MIN(\sum_{j=1}^m u_{rij}) \\
\textrm{s.t.}\\
\textrm{Equation (\ref{ST1})}\\
\textrm{Equation (\ref{ST2})}\\
\textrm{Equation (\ref{ST3})}
\end{aligned}
\right.
\end{equation}

Where $u_{pij}$ is defined in Equation (\ref{PC}), $u_{nij}$ is defined in Equation (\ref{NC}), $u_{wij}$ is defined in Equation (\ref{LC}), $u_{rij}$ is defined in Equation (\ref{UC}).

\begin{proposition}
The game using Equation (\ref{OF2}) as objective function has a unique Nash equilibrium.
\end{proposition}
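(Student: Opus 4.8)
The plan is to reduce the multi-objective game of Equation~(\ref{OF2}) to the single-objective game of Equation~(\ref{OF1}), for which the preceding proposition already gives a unique Nash equilibrium, and to do this uniformly over scalarizations. Concretely, I would fix arbitrary strictly positive weights $w_p,w_n,w_w,w_r$ and form, for each scheduler $i$, the scalarized cost $\widetilde C_{ui}=w_p\sum_{j=1}^m u_{pij}+w_n\sum_{j=1}^m u_{nij}+w_w\sum_{j=1}^m u_{wij}+w_r\sum_{j=1}^m u_{rij}$; the choice $w_p=w_n=w_w=w_r=1$ recovers exactly $C_{ui}$ of Equation~(\ref{OF1}). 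A Nash equilibrium of the multi-objective game is a profile that is simultaneously a Nash equilibrium of every such scalarized game, so it suffices to show (i) each scalarized game has a Nash equilibrium, (ii) it is unique, and (iii) it is the same for all weight choices. For (i) and (ii) I would invoke the same continuity--convexity--monotonicity criterion used in the proof of the preceding proposition (a Rosen-type existence-and-uniqueness argument for convex games): player $i$'s feasible set is the compact convex simplex of Equation~(\ref{ST1}), on which Equation~(\ref{ST3}) holds strictly, so it is enough that each of the four component costs, as a function of $i$'s own slice vector $a_i=(a_{i1},\dots,a_{im})$ with the other schedulers fixed, is continuous and convex, with at least one strictly convex.

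The technical core is this componentwise check. The network cost $\sum_j u_{nij}$ of Equation~(\ref{NC}) is linear in $a_i$ and independent of the other schedulers, hence trivially continuous and convex. In the power cost~(\ref{PC}) and the loss cost~(\ref{LC}) the servicing-time contribution $\overline{h_j}\,a_{ij}\lambda_i$ is linear, so the only nontrivial part is the waiting-time term; substituting $F_{ij2}$ from Equation~(\ref{WT}) it has the form $\dfrac{a_{ij}\,(\beta_j+a_{ij}\lambda_i)}{1-\overline{h_j}\,(\beta_j+a_{ij}\lambda_i)}$, where $\beta_j=\sum_{k\neq i}a_{kj}\lambda_k\geq 0$ is constant in $i$'s variables and Equation~(\ref{ST3}) keeps the denominator strictly positive on the feasible region. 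I would compute $\partial/\partial a_{ij}$ and $\partial^2/\partial a_{ij}^2$ and show both are nonnegative: after clearing the positive denominator the second derivative becomes a polynomial in $a_{ij}$ whose coefficients are manifestly nonnegative given $\overline{h_j}\beta_j<1$. The utilization cost~(\ref{UC}) is the genuinely hard case, since the factors $f_r=\frac{C_i}{C_j}a_{ij}$ and $\frac{B_i}{D_j}a_{ij}$ contribute an extra $a_{ij}$, so its waiting-time part behaves like $\dfrac{a_{ij}^2\,(\beta_j+a_{ij}\lambda_i)}{1-\overline{h_j}\,(\beta_j+a_{ij}\lambda_i)}$, a cubic-over-linear rational function; its convexity I would establish by the same device of clearing the cubed positive denominator and checking nonnegativity of the resulting polynomial's coefficients, using $a_{ij}\in[0,1]$ and $\overline{h_j}\beta_j<1$.

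Summing the four verified pieces with the positive weights shows $\widetilde C_{ui}$ is continuous and convex in $a_i$ on the compact convex feasible set, and strictly convex wherever some $a_{ij}>0$ (from the strictly convex rational terms), so the scalarized game is a strictly convex game and has a unique Nash equilibrium by the argument of the preceding proposition. For point (iii) I would note that, because every component cost is nondecreasing in each $a_{ij}$, a player's optimum on the simplex is characterized by equalizing marginal costs across the active nodes, and the weighted combination of the KKT stationarity conditions collapses to a single weight-free fixed-point system; the cleanest alternative, given the paper's setup, is simply to \emph{define} the multi-objective Nash equilibrium to be the equilibrium of the equal-weight scalarization --- that is, of Equation~(\ref{OF1}) --- so that the statement becomes an immediate corollary of the preceding proposition. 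I expect point (iii), pinning down (or defining away) the weight-dependence, to be the main conceptual obstacle, while the main computational obstacle is the convexity of the utilization cost $u_{rij}$, whose extra $a_{ij}^2$ factor yields the messiest second-derivative polynomial of the four.
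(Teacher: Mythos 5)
Your ``technical core'' --- checking that each of the four component costs is continuous, nondecreasing and convex in $a_{ij}$ on the simplex of Equation~(\ref{ST1}) --- is in substance all that the paper's own proof contains: it simply asserts the nonnegativity of the first and second partial derivatives of $\sum_j u_{pij}$, $\sum_j u_{nij}$, $\sum_j u_{wij}$, $\sum_j u_{rij}$ and stops. The genuine gap is in the scalarization superstructure you build on top of it. First, your reduction rests on the claim that a Nash equilibrium of the multi-objective game is a profile that is simultaneously a Nash equilibrium of \emph{every} positive-weight scalarization; that is not the standard notion and is far too strong. The standard scalarization fact for convex vector objectives goes the other way: a best response for \emph{some} strictly positive weights is a (Pareto-)best response, so each choice of weights generically produces a \emph{different} Pareto--Nash equilibrium rather than all of them producing the same one. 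Second, and consequently, your point (iii) is false in this model: the four costs have genuinely different node-dependent marginal structures (the network cost is linear with coefficients $(c_{bw}c_{ij}+c_n)(e_{ij}+B_i/c_{ij})$, the congestion terms in the power and loss costs scale with $p_{j2}\overline{h_j^2}$ and $c_{rj}/MTTF_j$, the utilization cost carries an extra factor $a_{ij}$), so the KKT equalization conditions on the active nodes do depend on $w_p,w_n,w_w,w_r$ and the weighted stationarity system does not ``collapse to a weight-free fixed-point system.'' Your fallback --- defining the multi-objective equilibrium to be that of the equal-weight scalarization --- does not prove the stated proposition; it merely renames Equation~(\ref{OF1}) and Proposition~2.1, and it also does not match what the paper actually does with Equation~(\ref{OF2}): following \cite{27}, the authors handle the multi-objective game by solving the power-cost objective $\sum_j u_{pij}$ alone and only empirically observing the other three costs.

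A smaller remark, for fairness: even where your argument and the paper's coincide, convexity and monotonicity of each player's own cost give existence (via a standard fixed-point argument on the compact convex strategy sets) but not uniqueness of the Nash equilibrium; Rosen-type uniqueness needs a joint condition (diagonal strict convexity) across players, not just strict convexity in one's own slice vector. That weakness, however, is shared by the paper's own proof; the new and decisive gap in your proposal is the weight-independence claim (iii) on which your whole reduction hangs.
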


\begin{proof}
Just because Equation (\ref{OF2}) is continuous, convex, and increasing, that is,

$$\frac{\partial \sum_{j=1}^m u_{pij}}{\partial a_{ij}}\geq 0\quad \frac{\partial^2 \sum_{j=1}^m u_{pij}}{\partial a_{ij}^2}\geq 0$$

$$\frac{\partial \sum_{j=1}^m u_{nij}}{\partial a_{ij}}\geq 0\quad \frac{\partial^2 \sum_{j=1}^m u_{nij}}{\partial a_{ij}^2}\geq 0$$

$$\frac{\partial \sum_{j=1}^m u_{wij}}{\partial a_{ij}}\geq 0\quad \frac{\partial^2 \sum_{j=1}^m u_{wij}}{\partial a_{ij}^2}\geq 0$$

$$\frac{\partial \sum_{j=1}^m u_{rij}}{\partial a_{ij}}\geq 0\quad \frac{\partial^2 \sum_{j=1}^m u_{rij}}{\partial a_{ij}^2}\geq 0$$
\end{proof}

Equation (\ref{OF2}) is a multi-objective non-cooperative game, following the methods on solving multi-objective game in \cite{27}, we solve $MIN(\sum_{j=1}^m u_{pij})$, and observe other objective's results.

\begin{equation}\label{UPI}
u_{pi}=\sum_{j=1}^m(u_{pij})=\sum_{j=1}^m(c_p\cdot p_{j1}\cdot \overline{h_j}\cdot a_{ij}+c_p\cdot p_{j2}\cdot \frac{a_{ij}\cdot \overline{h_j^2}\cdot \sum_{k=1}^n(a_{kj}\cdot \lambda_k)}{2(1-\overline{h_j}\cdot \sum_{k=1}^n(a_{kj}\cdot \lambda_k))})
\end{equation}

We introduce a new variable $u_{ji}$ as Equation (\ref{UJI}) defines, which denotes the computational power of node $j$ that is available to scheduler $i$.

\begin{equation}\label{UJI}
u_{ji}=u_j-\sum_{k=1,k=\neq i}^n(a_{ij}\lambda_k)
\end{equation}

Combining Equation (\ref{UPI}) and Equation (\ref{UJI}), we get the new equation for power cost as Equation (\ref{UPI2}) defines.

\begin{equation}\label{UPI2}
u_{pi}=\sum_{j=1}^m(c_p\cdot p_{j1}\cdot \overline{h_j}\cdot a_{ij}+c_p\cdot p_{j2}\cdot \frac{a_{ij}\cdot \overline{h_j^2}\cdot (a_{ij}\cdot \lambda_i+\frac{1}{\overline{h_j}}-u_{ji})}{2\overline{h_j}(u_{ji}-a_{ij}\cdot\lambda_i)})
\end{equation}

Since all formulae in Equation (\ref{OF2}) are all convex, the first-order Karush-Kuhn-Tucker conditions are necessary and sufficient to solve it. The Lagrangian is given by

\begin{equation}\label{L}
L=\sum_{j=1}^m(c_p\cdot p_{j1}\cdot \overline{h_j}\cdot a_{ij}+c_p\cdot p_{j2}\cdot \frac{a_{ij}\cdot \overline{h_j^2}\cdot (a_{ij}\cdot \lambda_i+\frac{1}{\overline{h_j}}-u_{ji})}{2\overline{h_j}(u_{ji}-a_{ij}\cdot\lambda_i)}- \alpha\cdot a_{ij})+\alpha
\end{equation}

Let

\begin{equation}
\frac{\partial L}{\partial a_{ij}}=0
\end{equation}

\begin{equation}
\frac{\partial L}{\partial \alpha}=0
\end{equation}

We get the following equations.

\begin{equation}
c_p\cdot p_{j1}\cdot \overline{h_j}+c_p\cdot p_{j2}\cdot (\frac{\overline{h_j^2}u_{ji}}{2\overline{h_j}^2(u_{ji}-a_{ij}\lambda_i)^2}- \frac{\overline{h_j^2}}{2\overline{h_j}})=\alpha
\end{equation}

\begin{equation}
a_{ij}=\frac{u_{ji}}{\lambda_i}-\frac{\sqrt{c_p\cdot p_{j2}\cdot \overline{h_j^2}\cdot u_{ji}}}{\lambda_i\sqrt{\overline{h_j}(c_p\cdot p_{j2}\cdot\overline{h_j^2} +2\overline{h_j}(\alpha-c_p\cdot p_{j1}\cdot\overline{h_j}))}}
\end{equation}

By constraint (\ref{ST1}), $\alpha$ is given by the following equation.

\begin{equation}
\sum_{j=1}^m u_{ji}-\lambda_i=\sum_{j=1}^m \frac{\sqrt{c_p\cdot p_{j2}\cdot \overline{h_j^2}\cdot u_{ji}}}{\sqrt{\overline{h_j}(c_p\cdot p_{j2}\cdot\overline{h_j^2} +2\overline{h_j}(\alpha-c_p\cdot p_{j1}\cdot\overline{h_j}))}}
\end{equation}

We also need to satisfy the constraints (\ref{ST2}) and (\ref{ST3}), and $a_{ij}\geq 0$. Let $a_{ij}=0$, we get the following equation.

\begin{equation}
\alpha=c_p\cdot p_{j1}\cdot\overline{h_j}- \frac{c_p\cdot p_{j2}\cdot \overline{h_j^2}}{2\overline{h_j}} + \frac{c_p\cdot p_{j2}\cdot \overline{h_j^2}}{2\overline{h_j}^2\cdot u_{ji}}
\end{equation}

All computational nodes that make $a_{ij}< 0$ must be excluded, we set $a_{ij}=0$ for these nodes.

\subsection{Solving Algorithm}

Based on the game model in the above subsection, we can design the solving algorithm as follows.

Order the computational nodes according to potential power cost such that $u_{p1i}< u_{p2i}<\cdots<u_{pmi}$, $u_{pji}$ is defined by the following equation.

\begin{equation}
u_{pji}=c_p\cdot p_{j1}\cdot\overline{h_j}- \frac{c_p\cdot p_{j2}\cdot \overline{h_j^2}}{2\overline{h_j}} + \frac{c_p\cdot p_{j2}\cdot \overline{h_j^2}}{2\overline{h_j}^2\cdot u_{ji}}
\end{equation}

Then we have the following equation.

\begin{equation}
a_{ij}=\frac{u_{ji}}{\lambda_i}-\frac{\sqrt{c_p\cdot p_{j2}\cdot \overline{h_j^2}\cdot u_{ji}}}{\lambda_i\sqrt{\overline{h_j}(c_p\cdot p_{j2}\cdot\overline{h_j^2} +2\overline{h_j}(\alpha-c_p\cdot p_{j1}\cdot\overline{h_j}))}}, 1\leq j\leq d_i
\end{equation}

$\alpha$ is given by the following equation.

\begin{equation}
\sum_{j=1}^{d_i} u_{ji}-\lambda_i=\sum_{j=1}^{d_i} \frac{\sqrt{c_p\cdot p_{j2}\cdot \overline{h_j^2}\cdot u_{ji}}}{\sqrt{\overline{h_j}(c_p\cdot p_{j2}\cdot\overline{h_j^2} +2\overline{h_j}(\alpha-c_p\cdot p_{j1}\cdot\overline{h_j}))}}
\end{equation}

$d_i:1 \leq d_i\leq m$ is the maximum positive integer that satisfies (\ref{ST4}).

\begin{equation}\label{ST4}
\sum_{j=1}^{d_i} u_{ji}-\lambda_i \leq \sum_{j=1}^{d_i} \frac{\sqrt{c_p\cdot p_{j2}\cdot \overline{h_j^2}\cdot u_{ji}}}{\sqrt{\overline{h_j}(c_p\cdot p_{j2}\cdot\overline{h_j^2} +2\overline{h_j}((c_p\cdot p_{d_i1}\cdot\overline{h_{d_i}}- \frac{c_p\cdot p_{d_i2}\cdot \overline{h_{d_i}^2}}{2\overline{h_{d_i}}} + \frac{c_p\cdot p_{d_i2}\cdot \overline{h_{d_i}^2}}{2\overline{h_{d_i}}^2\cdot u_{d_ii}})-c_p\cdot p_{j1}\cdot\overline{h_j}))}}
\end{equation}

\section{Experiments}

In this section, we analyze the effects of different aspects on the average task power cost of the schedulers given by (13) and compare the result with the average-allocated algorithm. Besides, with the same parameter of game algorithm, we observe its impacts on other costs.

\subsection{Convergence to Equilibrium of the Game Algorithm}

In this experiment, we analyze the convergence to equilibrium of the game algorithm. Each scheduler chooses its own strategy independently and dependents on a particular system state. However, the system state is always changing when schedulers are running, so that the strategy for the scheduler needs to be updated when the system state is changing. In order to make the system reach to a stable state, whereby no player has a tendency to unilaterally change its strategy, it needs to make the algorithm run iteration and reach the Nash equilibrium eventually. The initial strategy $ {a_i} $ of each scheduler $i$ is the zero vector, each scheduler then refines and updates its strategy at each iteration. When the result does not change, we expect the system to reach a Nash equilibrium.

In this experiment, we set the average system load to 0.2. The values in Table \ref{TPROCESSOR} represent the average processing rate of each computational node and the relative job arrival rate for each scheduler is shown in Table \ref{TARS}. The result, as shown in Fig. \ref{CGA} that the algorithm converges to a Nash equilibrium in 4 iterations (in this paper, we assume that convergence has occurred when the overall percentage change is 0.0001). In terms of the periodic scheduling done by each scheduler, an equilibrium is reached when the calculated strategy does not change from one iteration to another.

\begin{table}
\centering
  \caption{Relative processing rate of the computational nodes}
  \begin{tabular}{|c|c|c|c|c|c|c|c|c|}
    \cline{1-9}
    computational node  & 1 & 2 & 3 & 4 & 5 & 6 & 7 & 8 \\
    \cline{1-9}
    $u_j$(the average processing rate of node j) & 35 & 46 & 37 & 28 & 29 & 30 & 41 & 32\\
    \cline{1-9}
  \end{tabular}

  \label{TPROCESSOR}

\end{table}

\begin{table}
\centering
  \caption{Relative task arriving rate at the schedulers}
  \begin{tabular}{|c|c|c|c|c|c|}
    \cline{1-6}
    computational node  & 1 & 2 & 3 & 4 & 5 \\
    \cline{1-6}
    $\lambda_i$(the average task arriving rate at the scheduler i) & 6.672 & 2.78 & 3.336 & 6.672 & 5.004\\
    \cline{1-6}
  \end{tabular}

  \label{TARS}
\end{table}

\begin{figure}
    \centering
    \includegraphics[scale=0.3]{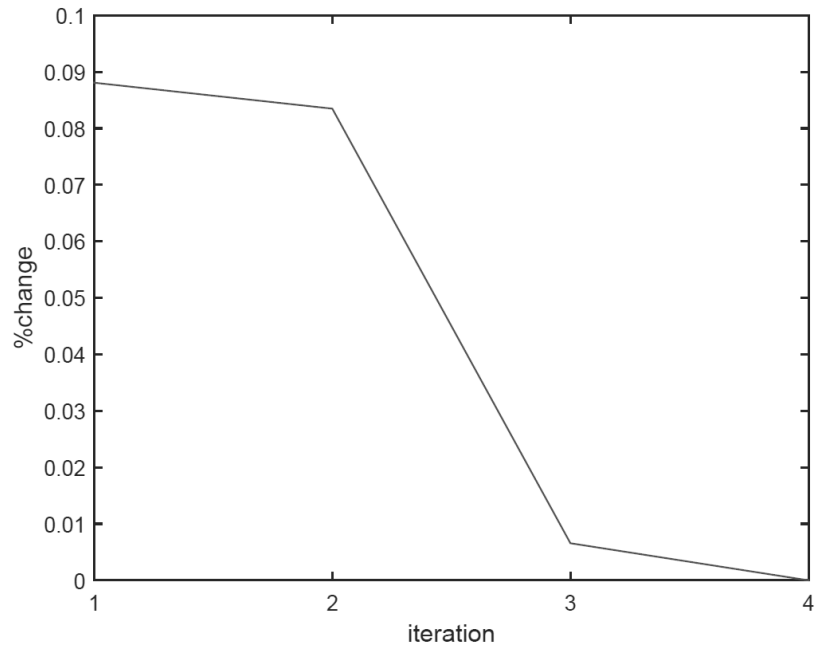}
    \caption{Convergence of the game algorithm}
    \label{CGA}
\end{figure}

The average power cost per task for each scheduler, when the system is at equilibrium, is shown in Fig. \ref{AC}. The power cost includes the cost of execution time of the task itself, the cost of waiting time at the queue given by (13). As shown in the figure, the average power cost per task for each scheduler, for both the game algorithm and average schemes, is normalized by dividing the sum of average task cost of all schedulers in the game algorithm. As can be seen from the figure, game algorithm has a lower cost than average scheme for every scheduler.

\begin{figure}
    \centering
    \includegraphics[scale=0.5]{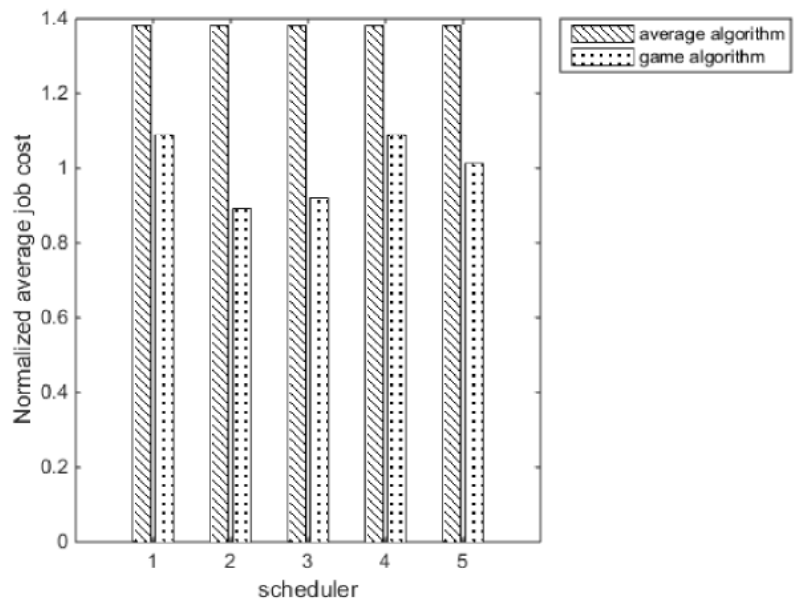}
    \caption{Average task cost for each scheduler}
    \label{AC}
\end{figure}

\subsection{Effect of System Loads}

In this set of experiments, we vary the average system load from 0.1 to 0.9. The same set of computational nodes and schedulers are used as in the previous set of experiments. The arrival rate of tasks for each scheduler is then adjusted to give the required average system load.

Fig. \ref{ACSL} shows the normalized average job costs as the system load is varied from 0.1 to 0.9. As before, the job cost is normalized by dividing each cost by the overall average cost of the game scheme. In this figure, we see an increase in the system wide average job cost as the system loads increase. This trend is explained by and is a consequence of (15). As the system loads increase, the average queue length at the computational node gets longer, and as a result, the average cost will be added by cost when the task is waiting in the queue. Both of the game algorithm and average scheme show the same trend, although the game algorithm gives lower expected costs.

\begin{figure}
    \centering
    \includegraphics[scale=0.3]{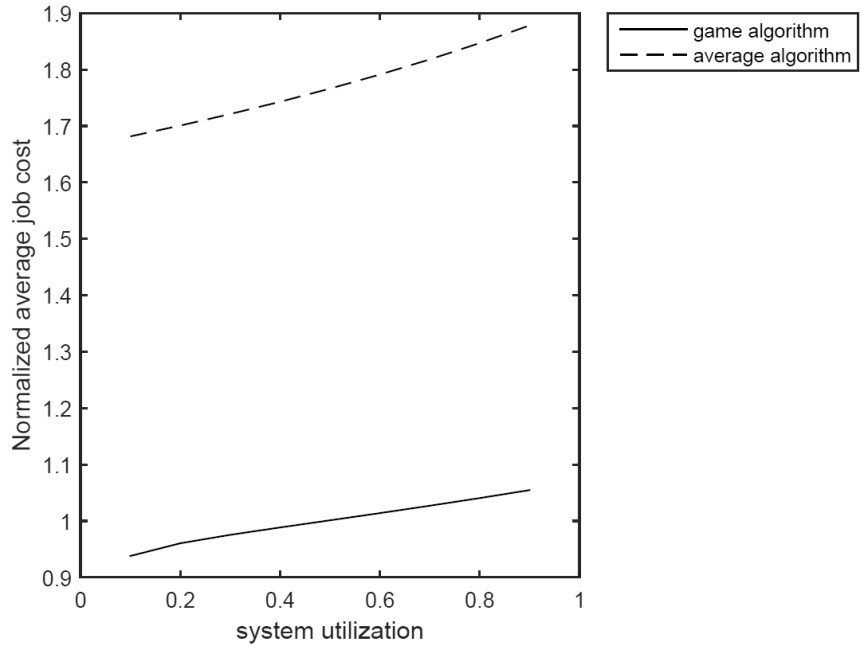}
    \caption{Average task cost versus system load}
    \label{ACSL}
\end{figure}

\subsection{Effect of System Size}

In this part of the experiment, we vary the number of computational nodes and the number of schedulers in the system and investigate its effect on the average cost of the schedulers for both the game algorithm and average schemes.

First, we vary the number of computational nodes in the system from 5 to 16. The processing rate of each of the computational nodes is shown in Table \ref{TPROCESSOR2}. We keep the average system load as 20 percent. The result of the effect of system size with computational nodes on the average cost of the schedulers is shown in Fig. \ref{ACSS}. As before, the job cost is normalized by dividing each cost by the average cost of the game algorithm. As can be seen in the figure, for both the game algorithm and average schemes, the average cost decreases as the number of computational nodes in the system increases. Fig. \ref{ACSS} also demonstrates that the game algorithm results in a lower overall average cost than the average scheme over system size ranging from 5 to 16 computational nodes. This shows that an efficient allocation of tasks to the computational nodes is important in grid systems having multiple computational nodes.

\begin{center}
\begin{table}
  \begin{tabular}{|c|c|c|c|c|c|c|c|c|c|c|c|c|c|c|c|c|}
    \cline{1-17}
    computational node  & 1 & 2 & 3 & 4 & 5 & 6 & 7 & 8 & 9 & 10 & 11 & 12 & 13 & 14 & 15 & 16 \\
    \cline{1-17}
    $u_j$(the average processing rate of node j) & 35 & 46 & 37 & 28 & 29 & 30 & 41 & 32 & 35 & 46 & 40 & 39 & 41 & 30 & 41 & 32\\
    \cline{1-17}
  \end{tabular}
  \caption{Relative processing rate of the computational nodes}
  \label{TPROCESSOR2}
\end{table}
\end{center}

\begin{figure}
    \centering
    \includegraphics[scale=0.3]{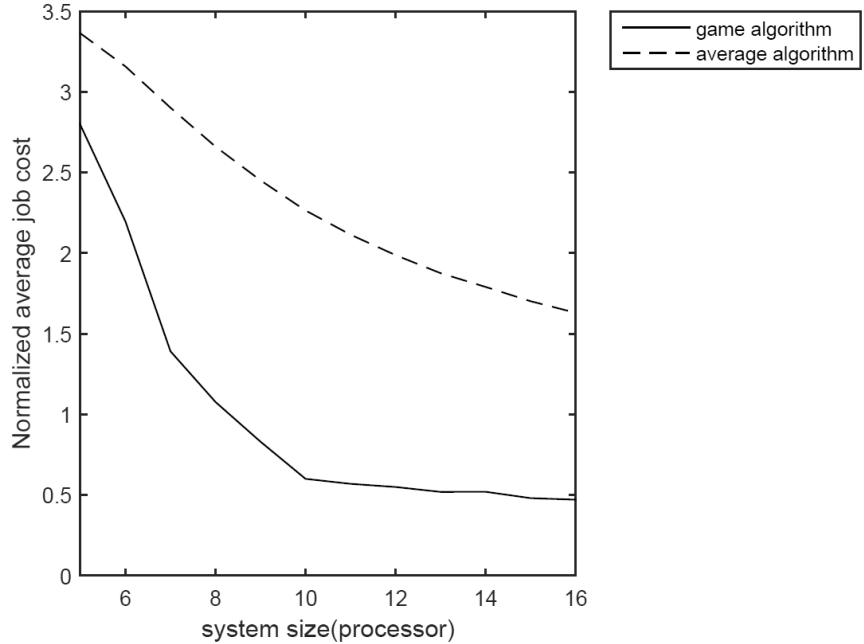}
    \caption{Average task¡¯s cost versus system size(computational node)}
    \label{ACSS}
\end{figure}

Then, we vary the number of schedulers in the system from 2 to 10. We keep the total number of tasks arrived to the schedulers in the system the same as the number of schedulers increasing. As such, we can analyze the effects on the average cost of the schedulers for both the game algorithm and average schemes. The result of the effect of system size with schedulers on the average cost of the schedulers is shown in Fig. \ref{ACSS2}. As can be seen in the figure, for the average scheme, the average cost keeps unchanged as the number of schedulers in the system increases. This is because total tasks are kept the same so the numbers of tasks, which are sent to each computational node kept the same, then the average cost would not change. But for the game algorithm, the figure shows a different result, which average cost of schedulers are different, as the number of schedulers are different. When there are 7 schedulers in the system, the result can be best that the average cost of schedulers can be lowest.

\begin{figure}
    \centering
    \includegraphics[scale=0.3]{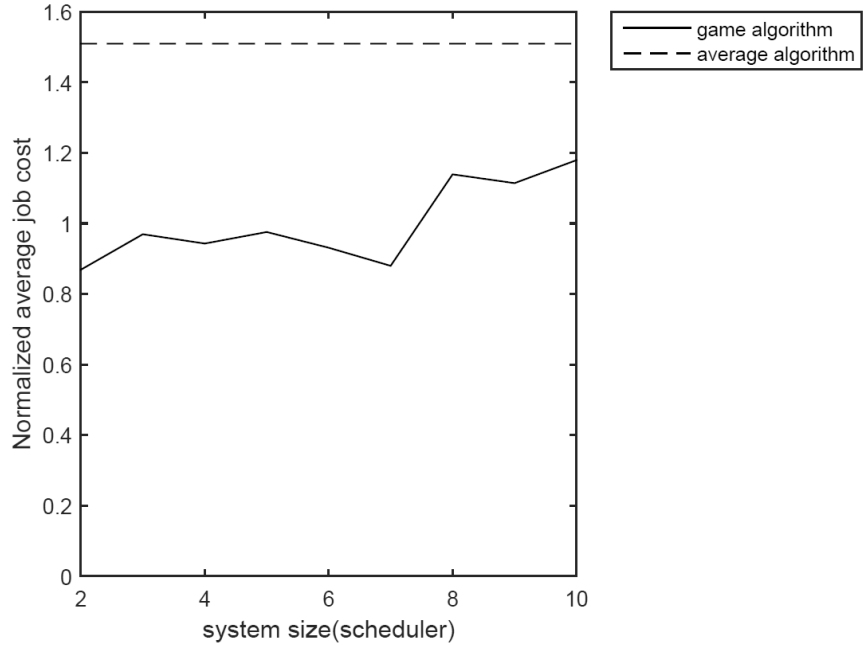}
    \caption{Average task cost versus system size(scheduler)}
    \label{ACSS2}
\end{figure}

\subsection{Effect of Service Time}

In the first set of experiments, we assume that the task service times follow an exponential distribution. However, it has been suggested that the service time of tasks for certain applications follows a heavy-tailed distribution, instead of an exponential distribution. One of the most common distributions used to model such a heavy-tailed distribution is the Bounded Pareto distribution.

The Bounded Pareto distribution is characterized by the following probability density function (pdf):

\begin{equation}
f(x)=\frac{\alpha k^\alpha \cdot x^{-\alpha -1}}{1-(\frac{k}{p})^\alpha}
\end{equation}

where $k$ is the minimum job execution time and $p$ is the maximum job execution time; the parameter ¦Á defines the shape of the hyperbolic curve of the distribution.
The mean (first moment) of the distribution is given by

\begin{equation}
\overline{h}=\frac{\alpha}{\alpha -1} \frac {k^\alpha}{(\frac {k}{p})^\alpha}(\frac {1}{k^{\alpha -1}}-\frac {1}{p^{\alpha -1}})
\end{equation}

and the second moment is given by

\begin{equation}
\overline{h^2}=\frac{\alpha}{\alpha -2} \frac {k^\alpha}{(\frac {k}{p})^\alpha}(\frac {1}{k^{\alpha -2}}-\frac {1}{p^{\alpha -2}})
\end{equation}

As before, we use 8 computational nodes in this set of experiments. The parameters $k$ and $p$ used for each of the computational nodes are shown in Table \ref{KVEP} The hyperbolic curve parameter of the Bounded Pareto distribution is then set to $\alpha$= 1.1. Table \ref{RPPTP} summarizes the above values in terms of the expected task execution time at each of the computational nodes; Table \ref{RPPTP} also shows the variance of the task execution time at each of the computational nodes.

The same set of schedulers and parameters are then used as in the previous set of experiments. We then vary the system loads from 0.1 to 0.9 and investigate the effect of the Bounded Pareto service time of tasks on the number of iterations required to reach equilibrium for the game algorithm. The effect of the Bounded Pareto service times on the overall average task power cost are shown in Fig. \ref{STSL}. As can be seen in the figure, the trend in the system wide average job power cost is similar with the previous result shown in Fig. \ref{ACSL}, where the service times follow an exponential distribution. Besides, the game algorithm gives a lower overall task power cost than the average-allocated algorithm.

\begin{center}
\begin{table}
  \begin{tabular}{|c|c|c|c|c|c|c|c|c|}
    \cline{1-9}
    computational node  & 1 & 2 & 3 & 4 & 5 & 6 & 7 & 8 \\
    \cline{1-9}
    k & 0.001 & 0.001 & 0.002 & 0.002 & 0.003 & 0.003 & 0.004 & 0.004\\
    \cline{1-9}
    p & 0.07 & 0.07 & 0.08 & 0.08 & 0.09 & 0.09 & 0.1 & 0.1 \\
    \cline{1-9}
  \end{tabular}
  \caption{k value for each computational node}
  \label{KVEP}
\end{table}
\end{center}

\begin{center}
\begin{table}
  \begin{tabular}{|c|c|c|c|c|c|c|c|c|}
    \cline{1-9}
    computational node  & 1 & 2 & 3 & 4 & 5 & 6 & 7 & 8 \\
    \cline{1-9}
    meas(s) & 0.003843 & 0.003843 & 0.006906 & 0.006906 & 0.009746 & 0.009746 & 0.012471 & 0.012471\\
    \cline{1-9}
    variance($s^2$) & 5.52E-05 & 5.52E-05 & 0.000133 & 0.000133 & 0.000229 & 0.000229 & 0.000345 & 0.000345 \\
    \cline{1-9}
  \end{tabular}
  \caption{relative Processing Power of the Computational Nodes}
  \label{RPPTP}
\end{table}
\end{center}

\begin{figure}
    \centering
    \includegraphics[scale=0.3]{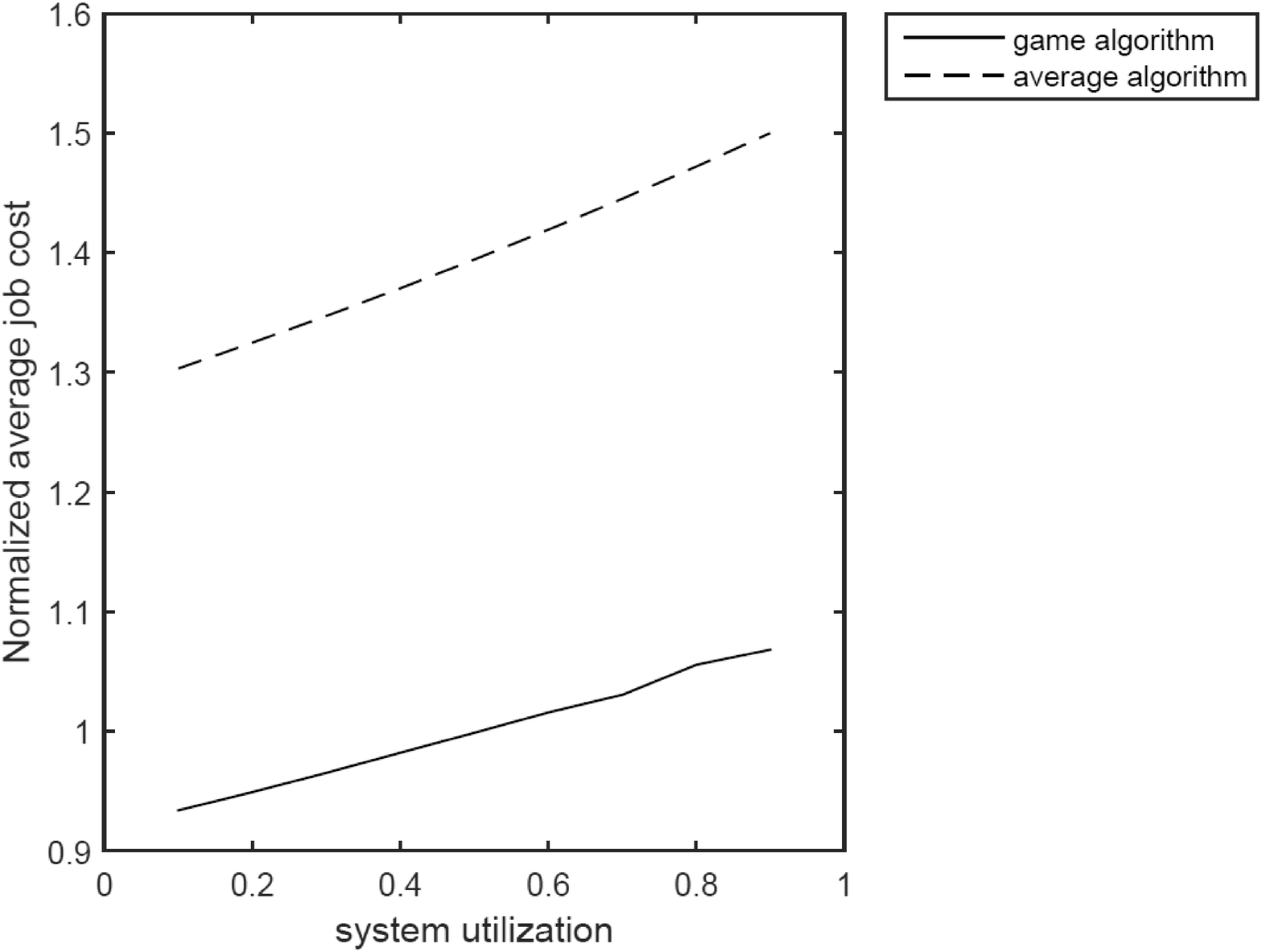}
    \caption{Average task¡¯s power cost versus system load}
    \label{STSL}
\end{figure}
\subsection{Fairness}

In this part of the experiment, we investigate the fairness of each of the different schemes. Fairness would be achieved when the average task power cost for each of the scheduler is the same. If one scheduler has a lower average task power cost and another has a higher average task power cost, then the scheduling scheme can be considered unfair, as it gives some schedulers an advantage and other schedulers a disadvantage.

A fairness index given by

\begin{equation}
FI=\frac{(\sum_{i=1}^n T_i)^2}{n \sum_{i=1}^n {T_i}^2}
\end{equation}

where $T_i$ is the average task power cost of scheduler $i$. If a load-balancing scheme is 100 percent fair, then $FI$ is 1.0. A fairness index close to 1.0 indicates a relatively fair load-balancing scheme

In the first part of the experiment, we vary the average system load from 0.1 to 0.9. The results are shown in Fig. \ref{FVSL1}. As can be seen in the figure, the average-allocated algorithm has a fairness index of 1.0 across the entire utilization range from 0.1 to 0.9. This is the inherent advantage of the average-allocated algorithm¡ªeven though it is a distributed, decentralized scheme and has more cost compared with the game algorithm, it guarantees the same average task power cost for each of the schedulers. As shown in the figure, the game algorithm decreases in ¡°fairness¡± as the system nears full capacity. However, the fairness index at 90 percent system load is still above 0.98, and depending on the requirement of the application, this value may be more optimal than the minimum acceptable level.

In the next set of experiments, we set the average system load to 20 percent and vary the number of computational nodes in the system from 2 to 8. The results are shown in Fig. \ref{FVSS1}. As in the previous experiment, the average-allocated algorithm gives a fairness index of 1.0 as the number of computational nodes is varied from 2 to 8. The game algorithm shows some variations as the number of computational nodes is varied. As before though, the fairness index in all of the cases is above 0.99, which, depending on the application, may be better than the minimum acceptable level.

In this part of the experiment, we change the set of schedulers from a highly heterogeneous set of schedulers shown in Table 2 to a less heterogeneous set of schedulers, as shown in Table 6. The results are shown in Figs. \ref{FVSL2} and \ref{FVSS2}. As can be seen in the figures, using a less heterogeneous set of schedulers has improved the fairness of the game algorithm as compared to the previous set of experiments.

\begin{figure}
    \centering
    \includegraphics[scale=0.3]{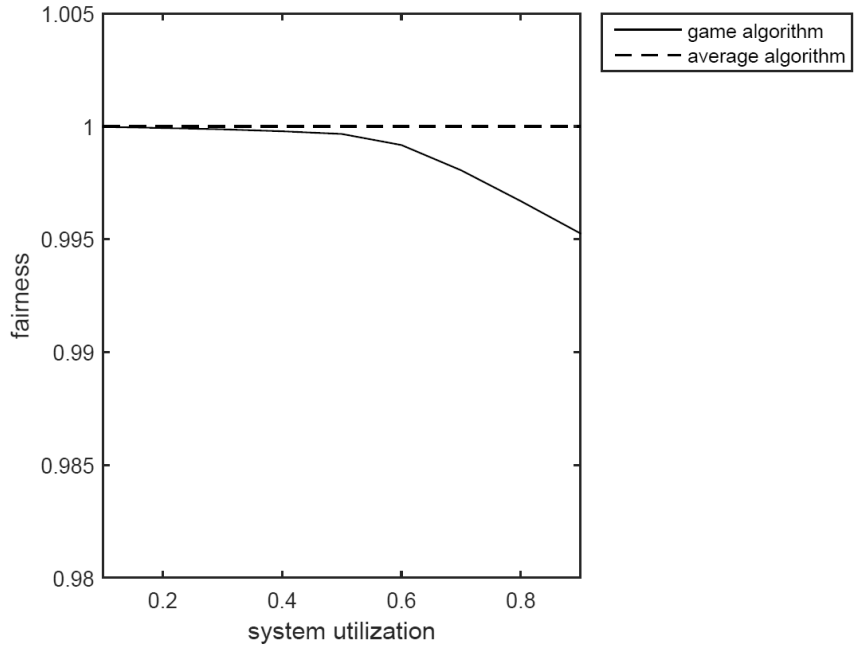}
    \caption{Fairness versus system load}
    \label{FVSL1}
\end{figure}

\begin{figure}
    \centering
    \includegraphics[scale=0.3]{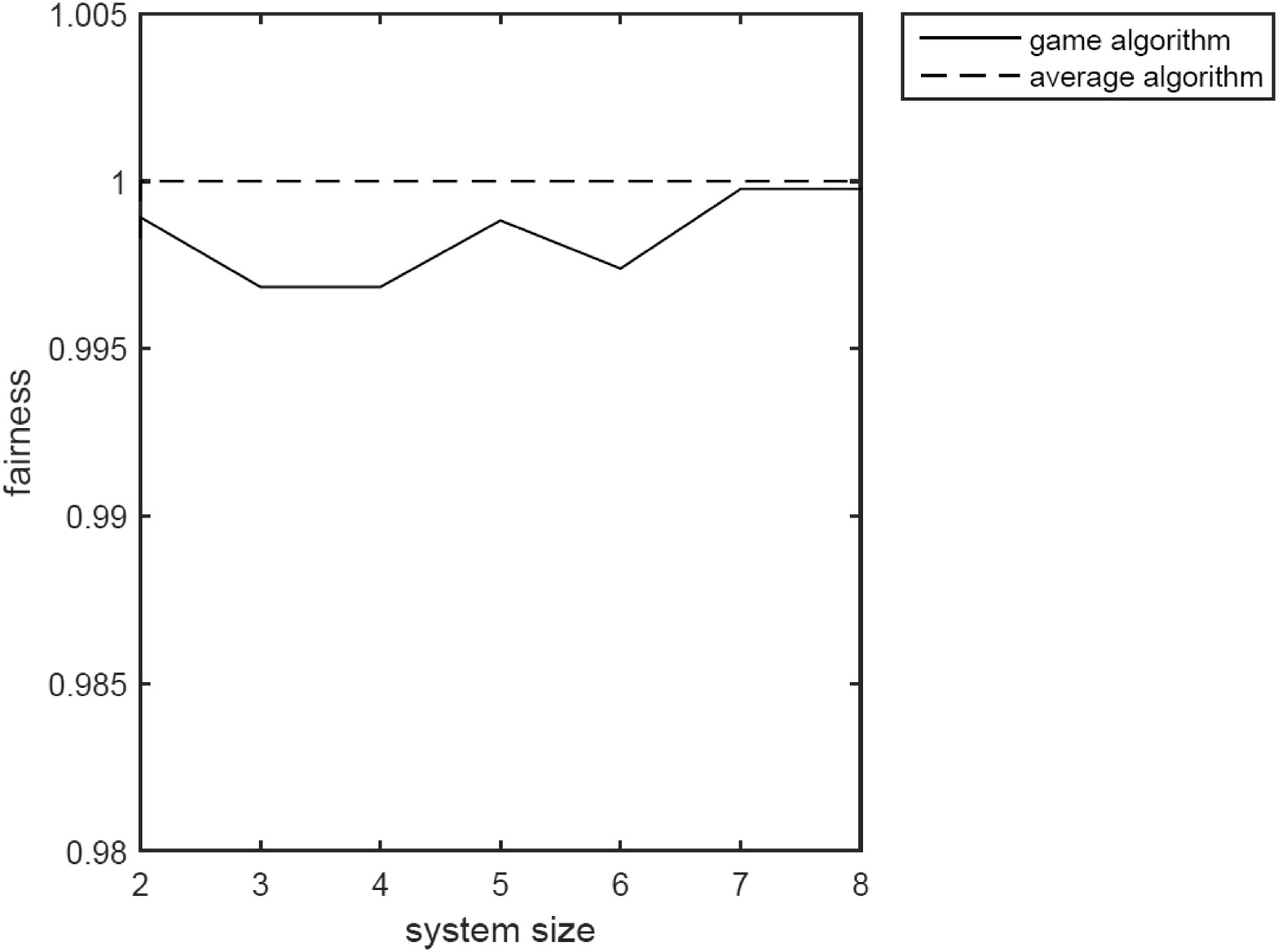}
    \caption{Fairness versus system size}
    \label{FVSS1}
\end{figure}

\begin{figure}
    \centering
    \includegraphics[scale=0.3]{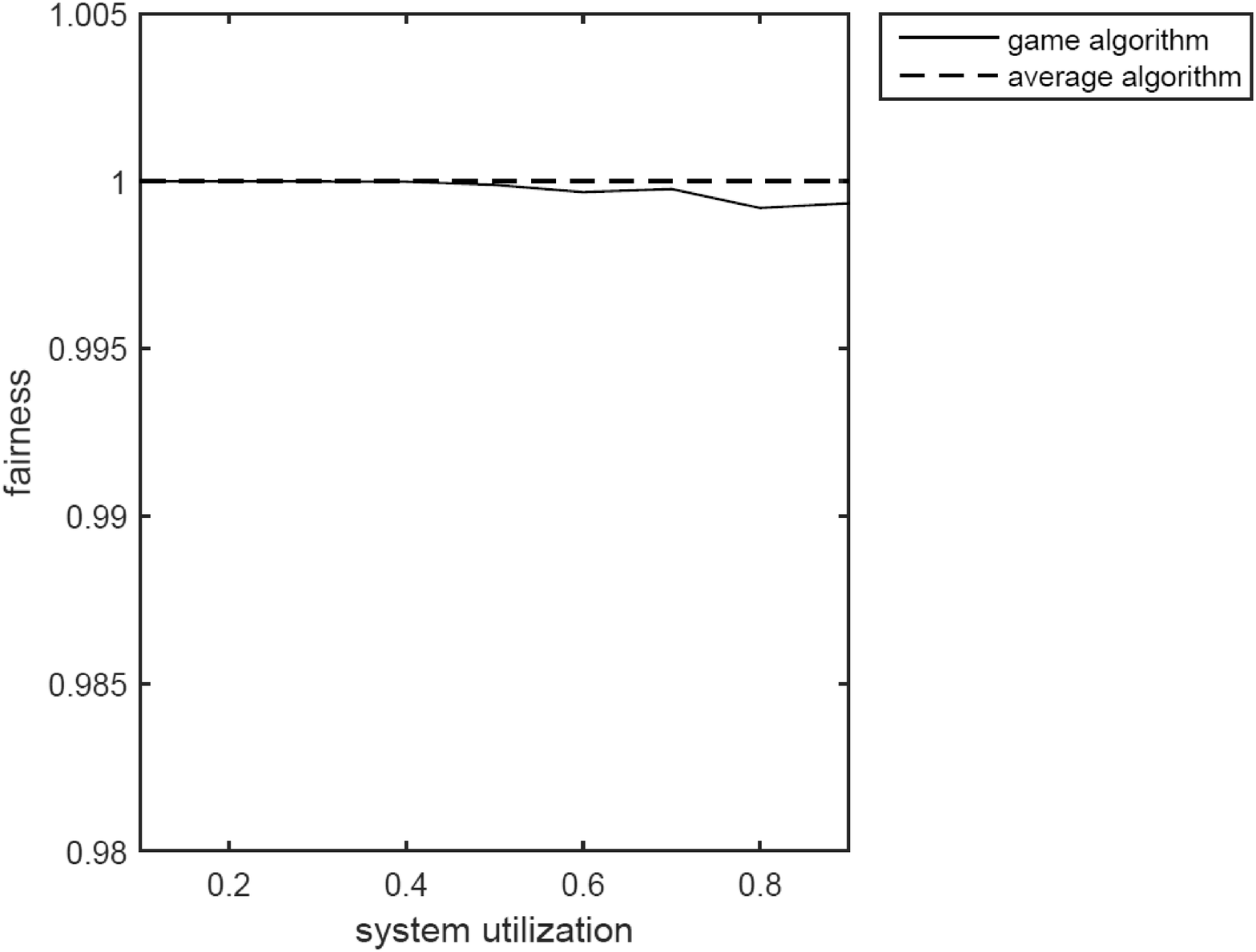}
    \caption{Fairness versus system load}
    \label{FVSL2}
\end{figure}

\begin{figure}
    \centering
    \includegraphics[scale=0.3]{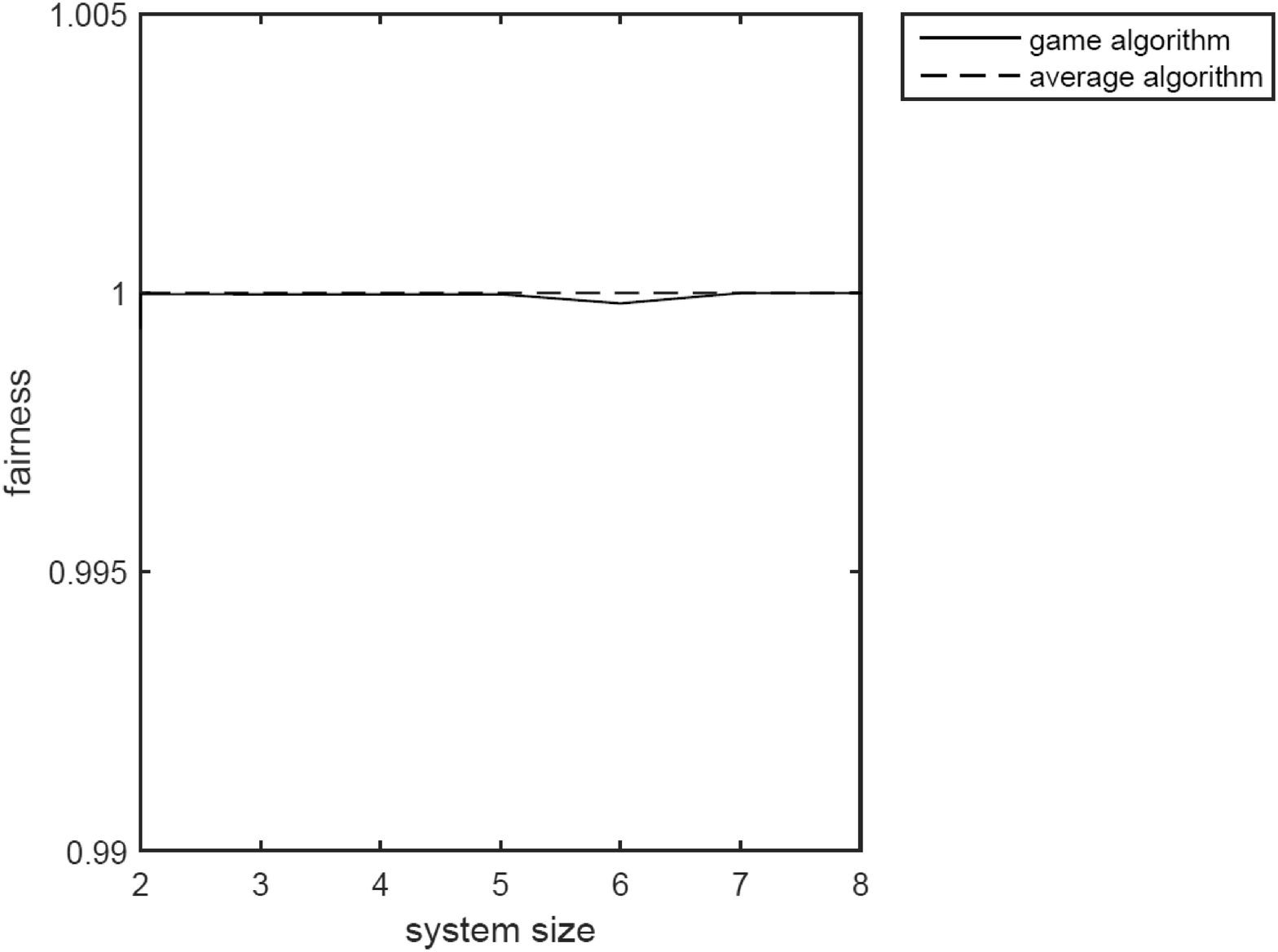}
    \caption{Fairness versus system size}
    \label{FVSS2}
\end{figure}

\subsection{Impact on other costs}

From these experiments above, we can find that the game algorithm has a strong advantage in power cost of scheduler. Followed by the analysis of power cost above with the same set of parameters and the same objective equation, we observe its impact on other objectives, that is network cost, loss cost, and utilization cost.

These results  are shown in Fig. \ref{ANC}, Fig.\ref{ALC}, Fig. \ref{AUC} for network cost, lost cost, and utilization cost respectively. As before, the job cost is normalized by dividing each cost by the average cost of the game scheme. As can be seen from figures, game algorithm has a lower cost than average scheme for every scheduler no matter on network cost or lost cost, or utilization cost. Which means that the result of multi-objective non-cooperative game, what we define in equation 12, is reasonable.

\begin{figure}
    \centering
    \includegraphics[scale=0.5]{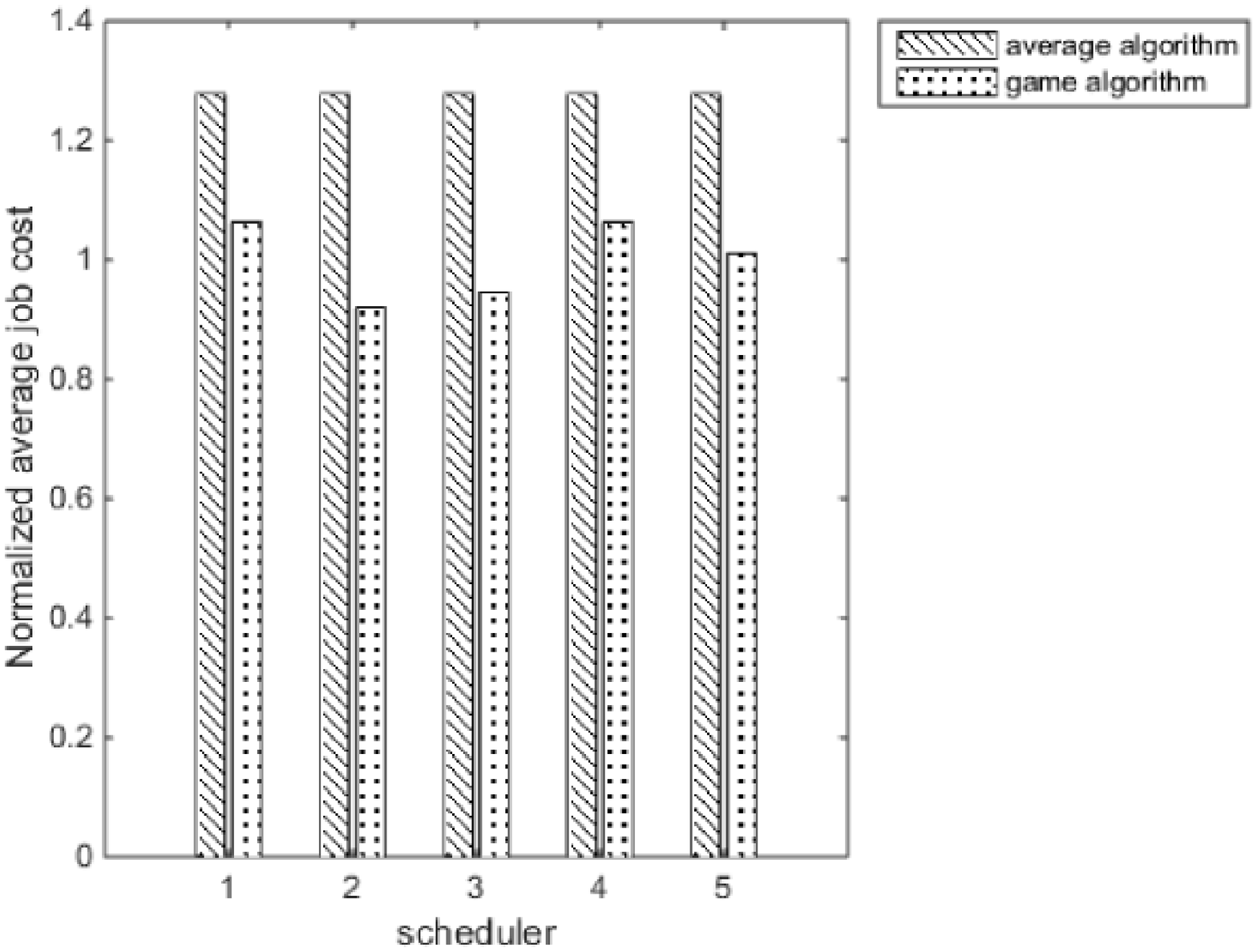}
    \caption{Average task network cost for each scheduler}
    \label{ANC}
\end{figure}

\begin{figure}
    \centering
    \includegraphics[scale=0.5]{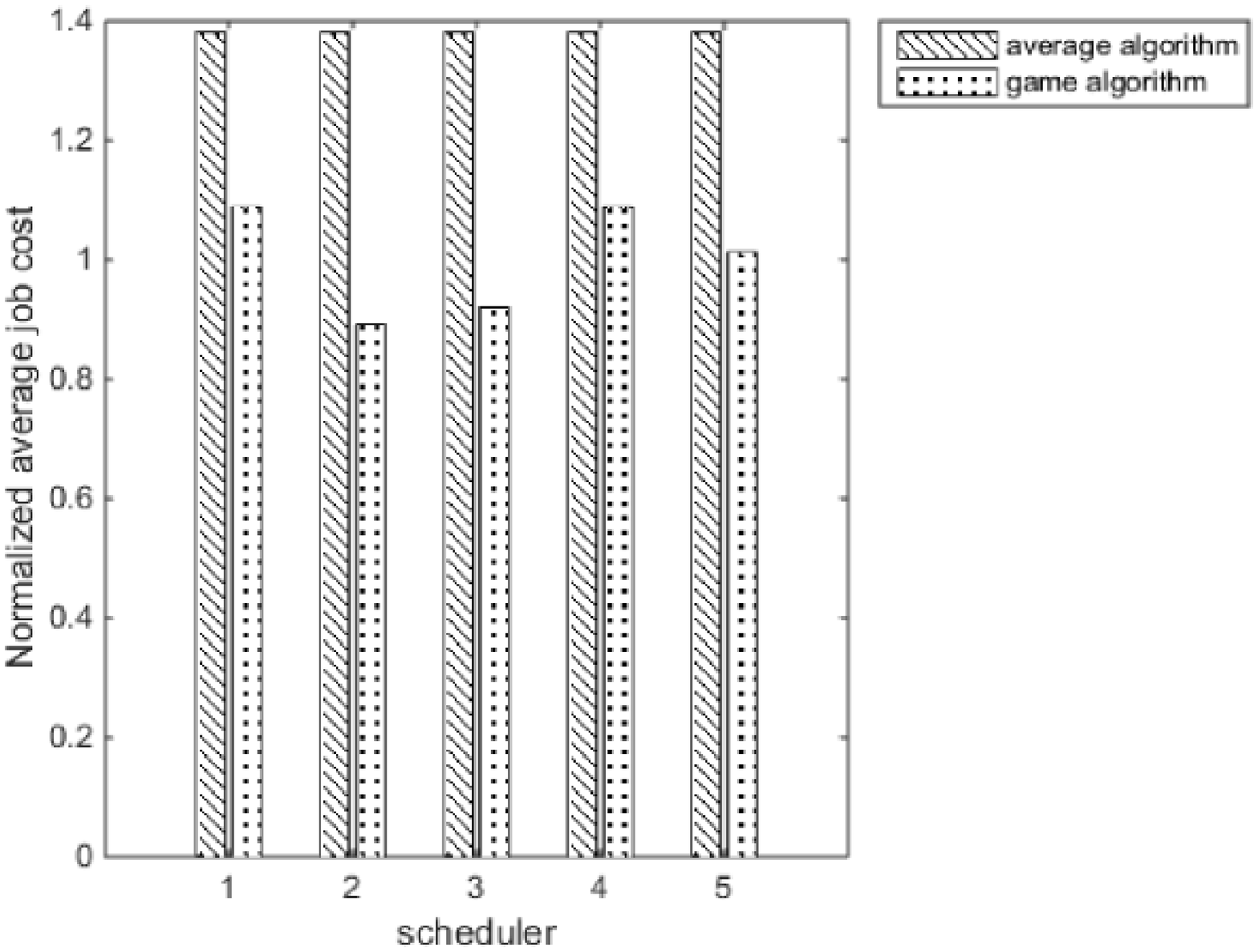}
    \caption{ Average task lost cost for each scheduler}
    \label{ALC}
\end{figure}

\begin{figure}
    \centering
    \includegraphics[scale=0.5]{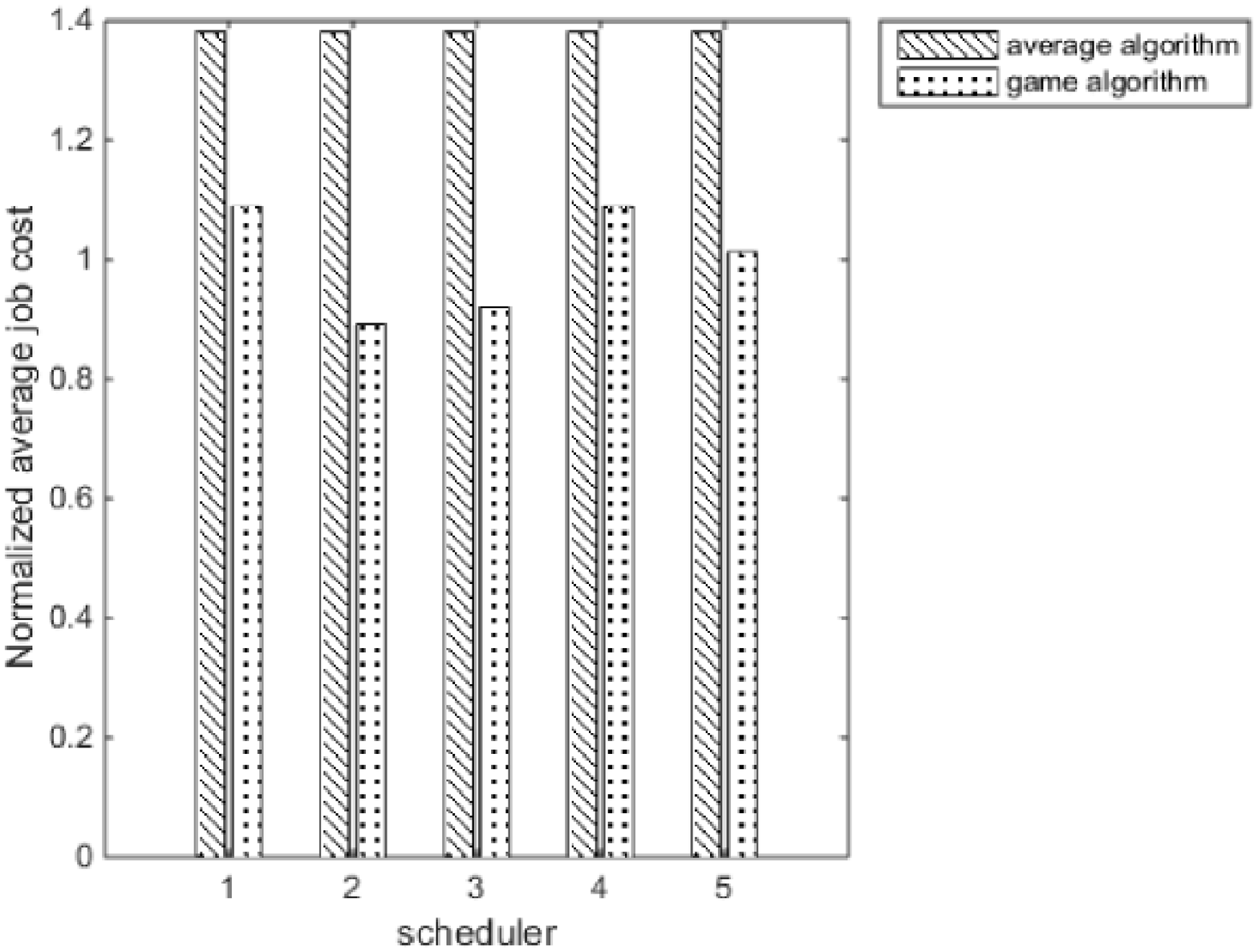}
    \caption{Average task utilization cost for each scheduler}
    \label{AUC}
\end{figure}

\section{Conclusions}

In this paper, we propose a game theoretic algorithm that solves the grid load balancing problem. It aims at minimizing the average task cost for schedulers when tasks are executed in the grid system. The algorithm is semi-static and responds to the changes in system states during runtime. This game algorithm does not assume any particular distribution for service times. It can run correctly only with the first moment and second moment of service times. The experiment results show that the algorithm has a lower cost in the schedule of computational grid.

%
%

\label{lastpage}

\end{document}